\documentclass[reqno]{amsart}
\usepackage[english]{babel}
\usepackage[latin1]{inputenc}
\usepackage{amsfonts}
\usepackage{amssymb}
\usepackage{amsmath}
\usepackage{amsthm}
\usepackage{graphicx}
\usepackage{epsfig}
\usepackage{fancyhdr}
\usepackage[all]{xy}
\usepackage{xcolor}
\usepackage{hyperref}
\usepackage{amsmath,amssymb}
\usepackage[bbgreekl]{mathbbol}

\numberwithin{equation}{section}

\newcommand{\C}{\mathbb{C}}

\newtheorem{theorem}{Theorem}[section]
\newtheorem{lemma}[theorem]{Lemma}

\newtheorem{defi}[theorem]{Definition}

\newcommand{\ip}[2]
      {\ensuremath{\langle #1,#2 \rangle}}

\usepackage{amsmath}
\title[Self-Adjoint realizations of  2d-Dimensional Canonical Systems]{ Self-Adjoint realizations of 2d-Dimensional Canonical Systems and applications}
\author{Keshav Acharya, Andrei Ludu}
\date{March 2025}

\begin{document}

\maketitle

\begin{abstract}
This paper studies linear relations and their self-adjoint realizations arising from $2d$-dimensional canonical systems, with a focus on how the symplectic structure interacts with boundary conditions. Understanding this interplay allows us to define self-adjoint realizations, which are crucial for analyzing the spectral properties of these systems. We prove that for each pair of Lagrangian boundary matrices $\Theta$ and $B$ satisfying appropriate orthonormality conditions, the restricted relation $T_{\Theta,B}$ is self-adjoint. Our approach relies on the symplectic geometry of boundary spaces and the isotropic structure of Lagrangian subspaces. We also discuss extensions to semi-infinite intervals using asymptotic boundary conditions. 

In the second part of the paper, we show how this framework applies to spectral problems from partial differential equations, including the stability of traveling waves and the linearization of the focusing nonlinear Schr\"odinger equation around soliton profiles. In particular, the self-adjoint structure with respect to the $H$-weighted inner product ensures the spectral properties needed for stability analysis using Evans function and transfer matrix methods. Applications to integrable systems, such as the stability of NLS bright solitons, are also presented.\\
\vspace{.2in}

\noindent \textbf{ Keywords:}   Canonical system, Linear Relations, Self-afjoint relations, Spectrum, Lagrangian subspace.  

\end{abstract}

\section{ Introduction }

 The theory of Canonical systems serves as a foundational tools in spectral theory, integrable systems, and mathematical physics. Their classical formulation provides a unifying framework for diverse equations such as Schr\"odinger, Dirac, and Jacobi systems, connecting operator theory and differential equations through a common symplectic structure. These systmes in two dimensions, initially studied by L. de Brange in \cite {deBranges1968}, have been extensively studied, see \cite{KRA, HSW, HSW1, HSW2, math1, Sakhnovich2010} and references therein. Extending these systems  to  $2d$-dimensions is both a natural and profound generalization that introduces deeper analytical challenges. In higher dimensions, these systems capture multi-channel interactions, internal symmetries, and couplings between distinct physical or mathematical modes-features. Consequently, they play a central role in  multi-channel quantum scattering,  elasticity theory,  matrix-valued Sturm-Liouville problems, and  systems governed by transfer matrices, see \cite{ Sakhnovich2010}.

A general $2d$-dimensional canonical system is expressed as
\begin{equation} \label{ca}
J \frac{dy}{dx} = zH(x)y(x), \quad z \in \mathbb{C},
\end{equation}
where $y(x) \in \mathbb{C}^{2d}$ and
$
J =
\begin{pmatrix}
0 & I_d \\
-I_d & 0
\end{pmatrix}
$
is the standard symplectic matrix of size $2d \times 2d$. The matrix $H(x)$ is a locally integrable, positive semi-definite, Hermitian $2d \times 2d$ matrix that serves as the \emph{Hamiltonian} of the system. Moreover, we require that there is no non-empty interval $I \subset [0, \infty)$ on which $ H $ vanishes almost everywhre. The complex number $z$ is a spectral parameter. In general, $H(x)$ may not be invertible at all points, reflecting physical degeneracies or symmetries that constrain the system. Such degeneracies are typical in matrix-valued and coupled problems and lead to nontrivial analytical complications, particularly regarding operator domains and self-adjointness.  These systems in 2d dimensions have also been studying, see some recent work in \cite{AcharyaLudu2021, AcharyaLudu2025}  

The study of solutions $u(x, z)$ to the canonical system (1.1) often focuses on those that are \emph{$H$-integrable}, satisfying
\begin{equation}
\int_{0}^{\infty} u^*(x)H(x)u(x)\,dx < \infty,
\end{equation}
 where $ ^*$ denote the conjugate transpose. The class of all such functions forms a Hilbert space $L^2(H, [0, \infty))$, defined by
\begin{equation}
L^2(H, [0, \infty)) =
\left\{
f :  \int_{0}^{\infty} f^*(x)H(x)f(x)\,dt < \infty
\right\}.
\end{equation}
Since $H(x)$ may be singular, two functions $f$ and $g$ are identified whenever $Hf = Hg$. Under this equivalence relation, one defines the inner product
\begin{equation}
\langle f, g \rangle =
\int_{0}^{\infty} f^*(x)H(x)g(x)\,dt,
\end{equation}
which is independent of the representative of the equivalence class and makes $L^2(H, [0, \infty))$ into a Hilbert space. This functional framework allows one to treat canonical systems as generalized differential relations rather than operators, a distinction that becomes essential when $H(x)$ fails to be strictly positive definite.

In this context, the differential expression
\[
\tau y = J^{-1}zH(x)y
\]
is not necessarily densely defined in $L^2(H)$. Instead, it induces a \emph{linear relation}, i.e., a multivalued closed subspace of $L^2(H) \times L^2(H)$, which generalizes the notion of a differential operator. The analysis of such relations provides a robust pathway to handle degeneracies while retaining self-adjointness and spectral interpretation \cite{Arens1961, DerkachMalamud1995}.  

A central question in this framework concerns the characterization of self-adjoint extensions of the symmetric linear relation induced by \eqref{ca}. Self-adjointness guarantees real spectra and unitary dynamics, both fundamental in quantum and wave systems.The motivation for studying higher-dimensional canonical systems extends beyond pure mathematics because of profound applications. \\

  Therefore, in this project we aim to systematically study the self-adjoint realizations of these systems and present some important applications, offering new insights into both the theoretical and applied aspects of these systems in higher dimensions. In section 2,  the fundamental theory of linear relations and their self-adjoint extensions with the appropriate boundary conditions are presented. Then in section 3,  some applications are presented, illustrating how these self-adjoint extensions can be used to analyze spectral properties, wave propagation, and other phenomena in physical and engineering systems.

\section{Preliminaries and Results}

 Define  a maximal relation $\mathcal T$ induced by a $2d$ dimensional canonical systems   \eqref{ca}, as follows.
\begin{align}
 \label{rel} \mathcal T = \{ (u,v) \in (L^2(H, I))^2 :  Ju' = H v \ \ \text{ a. e.} \text{ on } I \}. \end{align} Here $ I = [0, N]$ or $ [0, \infty).$
Also, $Ju'= Hv$ means that  $u$ has a absolutely continuous representative that satisfy  $Ju'= Hv$  almost everywhere on $I$.
The right-hand side does not depend on the choice of representative but the left side may depend. Therefore, a vector $u$ under  $\mathcal T$ can have several images. The set $D(\mathcal T) = \{f: (f,g) \in \mathcal T\}$ and $R(\mathcal T)= \{g: (f,g) \in \mathcal T\}$ are respectively the domain and the range of $\mathcal T.$ These relations for 2 dimensional canonical systems are well explained in \cite{KRA, math1}.

\begin{defi} Let $ \mathcal H$ be a Hilbert space and $\mathcal T = \{ (f,g): f, g \in \mathcal H\} \subset \mathcal H \oplus \mathcal H$ be a linear relation on $\mathcal H$ we define  the inverse of $\mathcal T$ is the relation $$ \mathcal T^{-1} = \{(g, f): (f,g) \in \mathcal T \}$$ and the adjoint of $\mathcal T$ is the relation $$ \mathcal T^* = \{ (h,k): \ip{h}{g} = \ip{k}{f} \text{  for all  }  (f,g) \in \mathcal T\}.$$ A relation $\mathcal T$ is symmetric if $\mathcal T^* \subseteq  \mathcal T$
and self-adjoint if $\mathcal T^* = \mathcal T .$
\end{defi}

The spectral analysis of system \eqref{ca} is carried out via the self-adjoint relation naturally induced by the system. In this context, Green's identity plays a central role, as it provides the fundamental formula for the self-adjoint relation. We explicitly establish this identity for the relation. We establish the identity for the relation $ \mathcal T$ defined in \eqref{rel} on the bounded interval $I =[0, N]$.  

\begin{lemma} \label{lemma 1}For any  $(u,v), (f, g) \in \mathcal T,  $ we have \begin{equation} \label{gi}\ip{u}{g}-\ip{v}{f}= u^*(N)Jf(N) -u^*(0)Jf(0).\end{equation} \end{lemma}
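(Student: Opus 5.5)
The identity is an integration by parts in which the differential equations defining $\mathcal T$ turn each inner product into an integral against $J$. First, by the definition \eqref{rel}, I will fix absolutely continuous representatives of $u$ and $f$ on $I=[0,N]$ for which
\[
Ju' = Hv, \qquad Jf' = Hg \quad \text{a.e. on } [0,N].
\]
The right-hand side of \eqref{gi} is then computed with these representatives.

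The left-hand side does not depend on the representatives chosen, since it only involves the classes of $u,v,f,g$ in $L^2(H,I)$. Moreover, $u^*Hg$ and $v^*Hf$ are integrable: the $H$-weighted Cauchy--Schwarz inequality $|u^*Hg|\le (u^*Hu)^{1/2}(g^*Hg)^{1/2}$ holds pointwise because $H\ge 0$, and the right side is integrable by Cauchy--Schwarz in $L^1$.

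Next I substitute the equations into the two inner products. For the first one,
\[
\ip{u}{g}=\int_0^N u^*(x)H(x)g(x)\,dx=\int_0^N u^*(x)Jf'(x)\,dx .
\]
For the second one, $H$ is Hermitian and $J^*=-J$, so
\[
\ip{v}{f}=\int_0^N (H(x)v(x))^*f(x)\,dx=\int_0^N (Ju'(x))^*f(x)\,dx=-\int_0^N u'^*(x)Jf(x)\,dx .
\]
Subtracting gives
\[
\ip{u}{g}-\ip{v}{f}=\int_0^N\bigl(u'^*Jf+u^*Jf'\bigr)\,dx=\int_0^N \frac{d}{dx}\bigl(u^*(x)Jf(x)\bigr)\,dx .
\]

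The only step needing care is justifying the fundamental theorem of calculus for $u^*Jf$. On the compact interval $[0,N]$, absolutely continuous functions are bounded. A product of two absolutely continuous functions is therefore absolutely continuous, and it satisfies the product rule almost everywhere. Hence $u^*Jf$ is absolutely continuous, and its integral of the derivative equals $u^*(N)Jf(N)-u^*(0)Jf(0)$, which is \eqref{gi}.

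The integrability of the individual terms $u'^*Jf$ and $u^*Jf'$ follows from the same fact. Alternatively, it follows from their equality with $-v^*Hf$ and $u^*Hg$, which are integrable as shown above. I expect no real obstacle here. The one conceptual point to state explicitly is that the boundary values are those of the chosen absolutely continuous representatives that satisfy the equations.
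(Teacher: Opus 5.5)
Your proof is correct and follows the paper's own argument. Like the paper, you substitute $Hg=Jf'$ and $Hv=Ju'$, use $J^*=-J$ to combine the two integrands into $\frac{d}{dx}\bigl(u^*Jf\bigr)$, and then apply the fundamental theorem of calculus. Your added justifications fill in details the paper leaves implicit: integrability via the $H$-weighted Cauchy--Schwarz inequality, absolute continuity of $u^*Jf$ on $[0,N]$, and the remark that the boundary values are those of the chosen absolutely continuous representatives.
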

\begin{proof} For any  $(u,v), (f, g) \in \mathcal T,  $ \begin{align*}  \ip{u}{g}-\ip{v}{f} & = \int_0^N u^*(x)H(x)g(x)dx -\int_0^{N} v^*(x)H(x)f(x)dx \\ & =  \int_0^N \Big(u^*(x)Jf'(x) - (Ju'(x))^* f(x)  \Big) \ dt 
	\\& =  \int_0^N \Big(u^*(x)Jf'(x) + u'(x))^* J f(x)  \Big)  	\\ & =  \int_0^N  \frac{d}{dt}\Big(u^*(x)Jf(x)) \Big)  \\ & = u^*(N)Jf(N) -u^*(0)Jf(0) .\end{align*} \end{proof}

In order to obtain a self-adjoint extension of $\mathcal T$ we need the right hand side of  \eqref{gi} to be zero separately.  That is  $u^*(N)Jf(N) = 0 $ and  $u^*(0)Jf(0) = 0.$ This is obtained by imposing suitable boundary conditions for all $f \in D(\mathcal T)$ at 0 and N.  \\

First, observe that the expressions $u^*(N)Jf(N)$ or $ u^*(0)Jf(0)$ induce a bilinear form in the vector space $\C^{2d}.$ So equip $\C^{2d}$ with a bilinear form given by $b(p,q) = q^*Jp,$ and an inner product $\langle p, q \rangle = q^*p $ for all $p, q \in \C^{2d}.$ This bilinear form is nondegenerate and skew-symmetric. Therefore $(\C^{2d}, b)$ is a symplectic vector space. Recall that a subspace $W$ of $\C^{2d}$ is called isotropic if $b(w_1, w_2) =  0$ for all $w_1, w_2 \in W .$ 
 This can also be expressed as $ W \subseteq W^{\perp_b} ,$  where $ W^{\perp_b} = \{ v \in C^{2d} : b(v, w) = 0, \, \text{for all } w \in W\}$.
 
 A maximal Lagrangian subspace of a symplectic vector space is a largest possible isotropic subspace. Equivalently, they satisfy $ W=W^{\perp_b}.$ If the symplectic space is $ 2d $ dimensional, then the Lagrangian subspace is $ d $ dimensional. These Lagraian subspaces are explained and applied for the discrete canonical systems in \cite{FR}.

 Suppose $ \mathcal L $ is the space of all $ \Theta =  \begin{pmatrix} \theta_1 \\ \theta_2 \end{pmatrix} \in C^{2d\times d}$,  $ \theta_1, \theta_2 \in\C^ {d\times d}  $ satisfying \begin{equation} \label{bc} \theta_1 \theta_1^* + \theta_2 \theta_2^* = 1, \,\,\ \ \  \theta_1 \theta_2^* - \theta_2 \theta_1^* = 0    . \end{equation}

 For fixed $ \Theta ,\   B \in \mathcal L $  we impose the  following boundary conditions in \eqref{ca}
\begin{equation} \label{bc1} (\theta_1, \theta _2 ) u(0) = 0, \ \ \  (\beta_1, \beta_2)u(N) = 0 .\end{equation} and write the linear relation; which is a minimal relation;
\begin{align} \label{bc1} \nonumber  \mathcal T_{\Theta, B} =\{ (u,v) \in (L^2(H, [0,N]))^2 : Ju'= Hv, a. e. , \\  (\theta_1, \theta _2 ) u(0) = 0, \ \   (\beta_1, \beta_2)u(N) = 0  \}.\end{align}

\begin{theorem} \label{thm1} For each $ \Theta ,\  B  \in \mathcal L $ the  relation $\mathcal T_{\Theta, B} $ is self-adjoint.\end{theorem}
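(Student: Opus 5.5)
Conditions \eqref{bc} say $\Theta^*\Theta$-type normalization: $\theta_1\theta_1^*+\theta_2\theta_2^*=1$ and $\theta_1\theta_2^*=\theta_2\theta_1^*$. Set $\Theta^\flat=(\theta_1,\theta_2)$, a $d\times 2d$ matrix. The first step is a linear-algebra lemma. I claim
\[
\ker \Theta^\flat=\{\,J^{*}\Theta^{\flat*}c : c\in\C^d\,\}=\{\,(\theta_2^*c,\,-\theta_1^*c)^T\,\},
\]
and that this kernel is a Lagrangian subspace of $(\C^{2d},b)$.

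To check the claim:
\begin{itemize}
\item $\Theta^\flat(\theta_2^*c,-\theta_1^*c)^T=(\theta_1\theta_2^*-\theta_2\theta_1^*)c=0$.
\item $\Theta^\flat$ has rank $d$, since $\Theta^\flat\Theta^{\flat*}=1$. Hence its kernel has dimension $d$.
\item $c\mapsto(\theta_2^*c,-\theta_1^*c)$ is injective, by the same identity $\Theta^\flat\Theta^{\flat*}=1$.
\item For $p=(\theta_2^*c,-\theta_1^*c)$ and $q=(\theta_2^*e,-\theta_1^*e)$ one computes $q^*Jp=e^*(\theta_2\theta_1^*-\theta_1\theta_2^*)c\cdot(\pm1)=0$. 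So the kernel is isotropic of dimension $d$, hence Lagrangian: $W_\Theta=W_\Theta^{\perp_b}$.
\end{itemize}
The same holds for $W_B=\ker(\beta_1,\beta_2)$.

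\textbf{Symmetry.} Take $(u,v),(f,g)\in\mathcal T_{\Theta,B}$. Then $u(0),f(0)\in W_\Theta$ and $u(N),f(N)\in W_B$, so both boundary terms in Lemma~\ref{lemma 1} vanish by isotropy. Thus $\ip{u}{g}=\ip{v}{f}$, i.e. $\mathcal T_{\Theta,B}\subseteq\mathcal T_{\Theta,B}^*$.

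\textbf{Reverse inclusion.} Let $(h,k)\in\mathcal T_{\Theta,B}^*$. I would argue in two steps.

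First, $(h,k)\in\mathcal T$. Consider the minimal relation $\mathcal T_0$ with $u(0)=u(N)=0$; it is contained in $\mathcal T_{\Theta,B}$, so $\mathcal T_{\Theta,B}^*\subseteq\mathcal T_0^*$. The claim $\mathcal T_0^*\subseteq\mathcal T$ is the standard step from the two-dimensional theory \cite{KRA, math1}. Let $w(x)=J\int_0^x Hk$, so that $Jw'=-Hk$ up to sign conventions. For $(u,v)\in\mathcal T_0$ the data $v$ range over those $Hv$ whose integral $\int_0^N Hv$ satisfies $J^{-1}\int_0^N Hv=u(N)-u(0)=0$. Integrating by parts then gives $\int (h-w)^*Hv=0$ for all such $v$. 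A Lagrange-multiplier/annihilator argument in $L^2(H)$ yields $H(h-w-c)=0$ for a constant vector $c$. So $h$ has an absolutely continuous representative $w+c$ with $Jh'=Hk$.

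This first step is the main obstacle. One must handle non-invertible $H$, and in particular show that the functions $Hv$ with $\int_0^N Hv=0$ are "dense enough" to force $h-w$ to be constant modulo $\ker H$. I would follow the $2\times2$ proof essentially verbatim, using the assumption that $H$ does not vanish on intervals.

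Second, with $(h,k)\in\mathcal T$, Lemma~\ref{lemma 1} gives
\[
0=\ip{u}{k}-\ip{v}{h}=u^*(N)Jh(N)-u^*(0)Jh(0)
\]
for all $(u,v)\in\mathcal T_{\Theta,B}$. I then need elements of $\mathcal T_{\Theta,B}$ with arbitrary prescribed $u(0)\in W_\Theta$ and $u(N)=0$ (and symmetrically). I would build these by solving $Ju'=Hv$ with suitable $v$ supported near $0$. This requires the map $v\mapsto\int_0^N Hv$ to hit all of $\C^{2d}$; if it does not, I would instead patch with solutions at $z=0$ and a cutoff construction.

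Granting such $u$, we get $b(h(0),p)=0$ for all $p\in W_\Theta$, so $h(0)\in W_\Theta^{\perp_b}=W_\Theta$, i.e. $(\theta_1,\theta_2)h(0)=0$. Likewise $(\beta_1,\beta_2)h(N)=0$. Hence $(h,k)\in\mathcal T_{\Theta,B}$, which proves self-adjointness.
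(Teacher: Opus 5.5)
Your step (a) is sound, and it is easier than you expect. $R(\mathcal T_0)=\{v:\int_0^N Hv=0\}$ is exactly the $L^2(H)$-orthogonal complement of the $2d$-dimensional space of constant vectors. So with $\tilde h(x)=-J\int_0^x Hk$, the function $h-\tilde h$ is a constant in $L^2(H)$, and no density argument is needed. Proving $(h,k)\in\mathcal T$ first is more careful than the paper, which applies Green's identity to $(h,k)$ before knowing this and then treats compactly supported $f\in D(\mathcal T_{\Theta,B})$ as arbitrary test functions.

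\textbf{The gap is in step (b).} Every $(u,v)\in\mathcal T$ satisfies
\[
u(N)-u(0)=-J\int_0^N Hv\in J\mathcal K^\perp,\qquad \mathcal K=\ker\bigl(\textstyle\int_0^N H\bigr)=\{c: Hc=0 \text{ a.e.}\}.
\]
So an element with $u(0)=p\in W_\Theta$ and $u(N)=0$ exists only if $p\in J\mathcal K^\perp$. No cutoff or $z=0$ patching can change this, because the constraint holds on all of $\mathcal T$. The hypotheses do allow $\mathcal K\neq\{0\}$.

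\emph{Counterexample to your construction.} Take $d=1$, $H=\operatorname{diag}(1,0)$ (the Hamiltonian of Section 3 with $\rho\equiv1$), and $\Theta=B=(0,1)$. Then $u_1$ is constant for every element of $\mathcal T$, so $u(0)=e_1\in W_\Theta$ with $u(N)=0$ is impossible. The only test elements in $\mathcal T_{\Theta,B}$ have $u(0)=u(N)=ae_1$, and Green's identity yields only $h_2(N)=h_2(0)$. The conclusion still holds, but only by shifting $h$ by a multiple of $e_2\in\mathcal K$. When $\mathcal K\neq\{0\}$ the absolutely continuous representative is determined only modulo $\mathcal K$. So the boundary conditions must be read as holding for some representative, the condition $h(0)\in W_\Theta$ is not a property of the class, and the two endpoints cannot be decoupled.

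\textbf{How to close it: a joint Lagrangian count.} Put $\Omega\bigl((p,q),(p',q')\bigr)=q'^*Jq-p'^*Jp$.
\begin{itemize}
\item The boundary data of $\mathcal T$ fill $\mathcal D=\{(p,q): q-p\in J\mathcal K^\perp\}$. Its $\Omega$-radical is $\Delta=\{(c,c):c\in\mathcal K\}$, so $\mathcal D/\Delta$ is nondegenerate of dimension $4d-2k$, where $k=\dim\mathcal K$.
\item The data of $\mathcal T_{\Theta,B}$ are $(W_\Theta\times W_B)\cap\mathcal D$. Using $W^\perp=JW$, this space has dimension $2d-k+m$ with $m=\dim(W_\Theta\cap W_B\cap\mathcal K)$. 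Its intersection with $\Delta$ has dimension $m$.
\item Hence its image in $\mathcal D/\Delta$ is isotropic of dimension $2d-k$, i.e.\ Lagrangian.
\item Since $(h(0),h(N))\in\mathcal D$ is $\Omega$-orthogonal to it, $(h(0),h(N))=(p,q)+(c,c)$ with $p\in W_\Theta$, $q\in W_B$, $c\in\mathcal K$. Then $h-c$ is the required representative.
\end{itemize}

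As written, your argument is complete only when $\int_0^N H$ is invertible. In that case, take $v$ constant on $[0,N]$ rather than supported near $0$. The paper's proof has the same gap: it simply asserts that $f(0)$ can range over $\ker\Theta$ with $f(N)=0$.
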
 
In order to prove the theorem, we proceed as follows. Let $\Theta = \begin{pmatrix}\theta_1, \theta_2\end{pmatrix}$. Observe that \[ \Theta J \Theta^* =   (\theta_1, \theta_2) \begin{pmatrix} 0 & I \\ -I & 0 \end{pmatrix}
\begin{pmatrix} \theta_1^*   \\ \theta_2^* \end{pmatrix} = \theta_1 \theta_2^* - \theta_2 \theta_1^* = 0. \] 

Consider a map $\Theta^*: \C^{d} \rightarrow \C^{2d}$   such that $\Theta^*u = \begin{pmatrix} \theta_1^* u \\ \theta_2^* u \end{pmatrix}.$ Then the image $\operatorname{Im } \Theta^* = \{w: w= \Theta^*v, \, \text{for some } v \in \C^d\}$ of the map $\Theta^*$ is a subspace of $\C^{2d} ,$ and we have the following lemma.

\begin{lemma} \label{lemma 1.3} The subspace $ W = \operatorname{Im } \Theta^*$ is a Lagrangian subspace of the symplectic space $\C^{2d}$ and $ W^\perp=JW  $. \end{lemma}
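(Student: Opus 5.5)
The plan is to check three things in turn: that $W=\operatorname{Im}\Theta^*$ has dimension $d$, that it is isotropic for $b$, and that its ordinary orthogonal complement is $JW$. Since a Lagrangian subspace of the $2d$-dimensional symplectic space $\C^{2d}$ is a $d$-dimensional isotropic subspace, the first two facts give the Lagrangian property.

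For the dimension, I will use the first condition in \eqref{bc}, which reads $\Theta\Theta^* = \theta_1\theta_1^*+\theta_2\theta_2^* = I_d$. If $\Theta^*v=0$, then $v=\Theta\Theta^*v=0$, so $\Theta^*$ is injective and $\dim W = d$.

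For isotropy, take $w_1=\Theta^*v_1$ and $w_2=\Theta^*v_2$. Then
\[ b(w_1,w_2)=w_2^*Jw_1 = v_2^*\,\Theta J\Theta^*\,v_1 = 0, \]
by the computation $\Theta J\Theta^*=\theta_1\theta_2^*-\theta_2\theta_1^*=0$ made just before the statement. Hence $W\subseteq W^{\perp_b}$. Because $b$ is nondegenerate, $\dim W^{\perp_b}=2d-\dim W=d$, so in fact $W=W^{\perp_b}$ and $W$ is Lagrangian.

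For $W^\perp = JW$, I will relate the symplectic and Hermitian complements. A vector $p$ lies in $W^{\perp_b}$ exactly when $w^*Jp=0$ for all $w\in W$, that is, when $\langle Jp, w\rangle = 0$ for all $w\in W$, that is, when $Jp\in W^\perp$. So $W^\perp = J\,W^{\perp_b} = JW$. As a consistency check, $\dim JW=d=2d-\dim W$, since $J$ is invertible.

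The only delicate point is the sign and conjugation convention in $b(p,q)=q^*Jp$, so that the isotropy condition collapses exactly to $\Theta J\Theta^*=0$. Beyond that, everything is finite-dimensional linear algebra.
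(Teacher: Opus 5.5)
Your proof is correct and follows the paper's argument. Isotropy comes from $\Theta J\Theta^*=0$, $\dim W=d$ comes from $\Theta\Theta^*=I_d$, and a dimension count finishes it. The only difference is cosmetic: in the last step you show $W^\perp=J\,W^{\perp_b}$ for an arbitrary subspace and then use $W=W^{\perp_b}$, whereas the paper gets $JW\subseteq W^\perp$ directly from isotropy and concludes by comparing dimensions.
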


\begin{proof} Let $w_1=\Theta^*u $, $w_2=\Theta^*v$ with $u, v\in \C^d$. Then
\[
w_1^* J w_2 
= (\Theta^*u)^* J (\Theta^*v)
= u^* (\Theta J \Theta^*) v
= 0 .\] Since this holds for all $u,v\in\C^d$, the subspace $W$ is
indeed isotropic. Moreover, since $\Theta \Theta^* = I$, $\dim W =d.$ It follows that $W$ is a Lagrangian subspace.

Next, let $ Jw \in JW $ then for any $w' \in W , \,$  $\langle Jw, w' \rangle = w'^*Jw = 0 .$ So $JW \subseteq W^{\perp}.$ Also,  since $ \dim (J W) = \dim (W) = d  $  and $\dim W^{\perp} = 2d -d =d $,  $ JW = W^{\perp}   $ \end{proof}

\begin{lemma}\label{lemma2.5}
Let $p=\binom{p_1}{p_2},\,q=\binom{q_1}{q_2}\in \C^{2d}$. 
Suppose $\Theta \in \mathcal L$. If $\Theta p=0$ and $\Theta q=0$, 
then $q^*Jp=0.$ Moreover, if $q^*Jp=0$ for all $q$ with $\Theta q=0$ then $\Theta p =0$.
\end{lemma}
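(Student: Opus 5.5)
The plan is to identify the kernel of $\Theta$ with a concrete Lagrangian subspace and then apply Lemma \ref{lemma 1.3}. Write $W=\operatorname{Im}\Theta^*$. Since $\Theta p=0$ means $\ip{p}{\Theta^*v}=v^*\Theta p=0$ for all $v\in\C^d$, we get $\ker\Theta = W^{\perp}$, the orthogonal complement with respect to the standard inner product. By Lemma \ref{lemma 1.3}, $W^\perp = JW$. So the first step is to record
\[
\ker \Theta = JW = \{ J\Theta^* v : v\in \C^d\}.
\]

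For the first assertion, take $p,q\in\ker\Theta$ and write $p=J\Theta^*u$, $q=J\Theta^*v$. Then, using $J^*=-J$ and $J^*JJ = J$ (from $J^2=-I$),
\[
q^*Jp = v^*\Theta J^* J J\Theta^* u = v^*\Theta J\Theta^* u = 0,
\]
because $\Theta J\Theta^*=\theta_1\theta_2^*-\theta_2\theta_1^*=0$ by \eqref{bc}. Thus $\ker\Theta$ is isotropic for $b$. Since it has dimension $2d-\operatorname{rank}\Theta = d$ (we have $\Theta\Theta^*=I$), it is in fact Lagrangian.

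For the converse, suppose $q^*Jp=0$ for every $q\in\ker\Theta$, i.e. $p\in(\ker\Theta)^{\perp_b}$. Because $\ker\Theta$ is Lagrangian, $(\ker\Theta)^{\perp_b}=\ker\Theta$, so $\Theta p=0$. To avoid citing the abstract maximality fact, I would argue directly. The condition says $Jp\perp \ker\Theta = W^\perp$, so $Jp\in W$, that is, $Jp=\Theta^*v$ for some $v$. Then $p=-J\Theta^*v$, and
\[
\Theta p = -\Theta J\Theta^* v = 0 .
\]

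The only delicate point is the identification $\ker\Theta=W^\perp$ together with the dimension count. This relies on $\Theta\Theta^*=I$, which forces $\Theta$ to have full rank $d$. Everything else reduces to the identity $\Theta J\Theta^*=0$.
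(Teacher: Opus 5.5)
Your proof is correct and follows the paper's argument. Both identify $\ker\Theta=(\operatorname{Im}\Theta^*)^\perp=W^\perp=JW$ via Lemma \ref{lemma 1.3} and reduce the first claim to $\Theta J\Theta^*=0$. For the converse you conclude $Jp\in W$ and apply $\Theta J\Theta^*=0$ directly, whereas the paper goes through $(JW)^\perp$ to land in $W^\perp=\ker\Theta$; this difference is only cosmetic.
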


\begin{proof}
In lemma \ref{lemma 1.3} we showed that the
subspace $W = \operatorname{Im } \Theta^*$  is Lagrangian. Consider  the subspace \[ \ker \Theta := \{\, w \in \mathbb C^{2d} : (\theta_1,\theta_2)w = 0 \,\}. \]

Since $\ker \Theta=(\operatorname{Im}\Theta^*)^\perp=W^\perp =JW$.  If $p,q\in\ker \Theta$, then $p=Jw_p$, $q=Jw_q$ with $w_p,w_q\in W$. Thus
\[ q^*Jp=(Jw_q)^*J(Jw_p)=w_q^*Jw_p.\]
Since $W$ is isotropic, $w_q^*Jw_p=0$. Therefore, $q^*Jp=0$.

 Next, suppose $ q^*Jp = 0, $ for all $ q \in \ker \Theta .$ Then $$Jp \in ( \ker \Theta)^{\perp} = (JW)^{\perp} = -JW^{\perp}.$$ It follows that $p \in W^{\perp}= \ker \Theta.$ That is $ \Theta p =0 .$
 \end{proof}
  
 If \((f,g)\in \mathcal T_{\Theta, B}^*\) then \((\theta_1,\theta_2)f(0)=0\) and \((\beta_1,\beta_2)f(N)=0\).

\begin{proof}[Proof of Theorem \ref{thm1}]

  First, we show that $\mathcal T_{\Theta, B} $ is symmetric, that is, $ \mathcal T_{\Theta, B} \subseteq \mathcal T_{\Theta, B}^* .$ Let $(h,k) \in \mathcal T_{\Theta, B} $. By  Using the Green's identity \eqref{lemma 1}, for every $(f,g)\in \mathcal T_{\Theta, B}$ we have
\begin{equation} \label{gieq}
    \langle g,h \rangle - \langle f,k \rangle.
 = f(N)^* J h(N) - f(0)^* J h(0) .\end{equation}
 
 By lemma \ref{lemma2.5}, $f(N)^* J h(N) $ and $f(0)^* J h(0) $ are separately zero.  Therefore \begin{equation*}
    \langle g,h \rangle - \langle f,k \rangle
 = 0\end{equation*} and hence   $(h,k) \in \mathcal T_{\Theta, B}^* .$ So, $\mathcal T_{\Theta, B}$ is symmetric.
Next we show that $\mathcal T_{\theta, \beta}^* \subseteq \mathcal T_{\theta, \beta} .$ Suppose $(h,k) \in \mathcal T_{\Theta, B}^* .$ Then by definition \begin{equation} \label{eqgi1}
  \langle g,h \rangle = \langle f,k \rangle, \quad \forall \, (f,g) \in \mathcal T_{\Theta, B}.\end{equation} Again by the Green's identity we have, 
\begin{equation} \label{eqgi2} \langle g,h \rangle - \langle f,k \rangle = f(N)^* J h(N) - f(0)^* J h(0) . \end{equation} Then by equations \eqref{eqgi1} and \eqref{eqgi2}, we get \begin{equation} \label{eqgi}  f(N)^* J h(N) - f(0)^* J h(0)  =0 .\end{equation} Since \eqref{eqgi1} is true for all $(f,g) \in \mathcal T_{\Theta, B},$ we may chose $f$ with $f(N) = 0 $ so the equation \eqref{eqgi} reduces to $ f(0)^* J h(0)  =0 .$ By lemma \ref{lemma2.5} $ \Theta h(0)= 0.$ That is $h$ satisfies the boundary condition at $0.$ Similarly, one can see that $h$ satisfies boundary condition at $N.$ Next we show that $Jh'= Hk$ a. e.\ . Since the equation \eqref{eqgi1} is true for all $(f,g) \in \mathcal T_{\Theta, B} ,$ we may chose $f$ with compact support. By equation \eqref{eqgi1} and the definition of inner product $\langle x,y\rangle:=\int_0^N x(t)^* H(t) y(t)\,dt$ we obtain from equation \eqref{eqgi1} that
 
\begin{align*}0 & =\int_0^N k^*(t)H(t)f(t)\,dt - \int_0^N h^*(t)Hg\,dt\\ & = \int_0^N k^*(t)H(t)f(t)\,dt - \int_0^N h^*(t)Jf'(t)\,dt
\end{align*}
Integrate the second term by parts. Since $f$ is absolutely continuous on $[0,N]$,
\[
\int_0^N h^* J f' = \big[h^*(t)J f(t)\big]_{t=0}^{t=N} - \int_0^N h'^*(t) J f(t)\,dt.
\]
Hence
\[
\int_0^N \big(k^*H + h'^* J\big)f \,dt
= \big[h^*(t)J f(t)\big]_{0}^{N}.
\]

Now choose $(f,g)\in T_{\Theta,B}$ with $f$ compactly supported in $(0,N)$ (this is possible by standard cutoff arguments), so the boundary term on the right vanishes. For such $f$ we obtain
\[
\int_0^N \big(k^*H + h'^* J\big)f \,dt =0.
\]
Because these equalities hold for all compactly supported $u$, the integrand must vanish almost everywhere:
\[
k^*(t)H(t) + h'^*(t) J = 0 \quad\text{a.e. }t\in(0,N).
\]
Using $J^*=-J$ (so $h'^* J = - (J h')^*$) we rewrite this as
\[
k^*H - (J h')^* = 0 \quad\text{a.e.}
\]
Taking adjoints gives the desired differential relation
\[
H k = J h' \quad\text{a.e. on }(0,N) 
.\] This concludes that $(h,k) \in \mathcal T_{\Theta, B}.$ Therefore $ \mathcal T_{\Theta, B} = \mathcal T_{\Theta, B}^*.$

\end{proof}

Next we extend the Theorem \ref{thm1} on the  half line $[0, \infty).$ we proceed by extending  the Green's identity: 
  For any  $(u,v), (f, g) \in \mathcal T,  $ we have \begin{equation} \label{gi}\ip{u}{g}-\ip{v}{f}= \lim_{N\rightarrow \infty} u^*(N)Jf(N) -u^*(0)Jf(0).\end{equation} 

 In order to obtain the self-adjoint extension of $\mathcal T$ the expressions on the right side of this lemma must be 0 separately. Consider a relation
 \begin{equation} \mathcal T^{\Theta, p} = \{ (f,g) \in \mathcal T:  (\theta_1, \theta_2)f(0) = 0, \quad \lim_{x\rightarrow \infty } f(x)^*JP(x) = 0 \} ,\end{equation} where $p(x) \in D(\mathcal T)$ has compact support. It can similarly be shown that the relation $ \mathcal T^{\Theta, p}  $ is self-adjoint.

\section{Physics Applications}
\label{sec.phys}

The classical canonical system  (linear Hamiltonian system) in Eq. (\ref{ca}) occurs in many concrete physics areas where this form (or some of its immediate close variants) appears \cite{math1,appl1,applqm,appl2}. This equation is the prototype canonical or Hamiltonian first-order system that appears widely in physics: small-oscillation Hamiltonian mechanics \cite{appl1}, reduction of wave/PDE spectral problems, waveguides and optics, electrical transmission lines \cite{appl2}, elasticity, and integrable scattering problems \cite{applqm}. The positive semidefinite Hermitian matrix $H$ ties to local energy, and the symplectic matrix $J$  enforces conservation laws, both being central to analysis, numerics, and physical interpretation. These problems are also related to standard tools of mathematical-physics such as fundamental (time-ordered) exponentials, Riccati reductions, Floquet theory, symplectic integrators and transfer/Evans-function methods.

A direct physical application in spatial dynamics for the above mathematical model is given, for example, by spectral problems from PDEs related to traveling waves. When reducing second-order ODEs (PDEs) (e.g. 1D Schr\"{o}dinger, elastic beam, or linearized PDE about a traveling wave) to first order in space, one obtains a $2n \times 2n$ Hamiltonian system where the eigenvalue (spectral boundary-value problems) take the form 
$$
Jy'=(H-\lambda M)y.
$$
The analysis presented above can help the study of the spectrum, to locate eigenvalues (stability), or compute Evans functions or transfer matrices. Many spectral problems arising from partial differential equations can be cast as $2d$-dimensional canonical systems through appropriate reformulation \cite{math1}. Consider the general second-order spatial operator in one dimension occurring in Sturm-Liouville problems
\begin{equation}
\mathcal{L}u = -\frac{d}{dx}\left(p(x)\frac{du}{dx}\right) + q(x)u = \lambda \rho(x)u, \quad x \in [0,N]
\end{equation}
where $p(x) > 0$, $\rho(x) > 0$, and $q(x) \geq 0$ are given coefficient functions. This appears, for example, in Schr\"odinger equations (with $p=1$, $\rho=1$, $q=V(x)$ the potential), elastic beam vibrations, and linearizations around traveling wave solutions. To transform this into canonical form, we introduce the phase-space variables:
\begin{equation}
y = \begin{pmatrix} u \\ p(x)\frac{du}{dx} \end{pmatrix}
\end{equation}
The spectral problem then becomes
\begin{equation}
J\frac{dy}{dx} = \lambda H(x)y - V(x)y
\end{equation}
where 
$$
J = \begin{pmatrix} 0 & 1 \\ -1 & 0 \end{pmatrix}, \ \ H(x) = \begin{pmatrix} \rho(x) & 0 \\ 0 & 0 \end{pmatrix}, \hbox{ and } V(x) = \begin{pmatrix} 0 & 0 \\ q(x) & 0 \end{pmatrix}.
$$
This can be rewritten in the standard form (1.1) by defining $z = \lambda$ and $\tilde{H}(x) = H(x) - \frac{1}{\lambda}V(x)$ for $\lambda \neq 0$. More generally, for systems of PDEs or higher-order equations, we obtain $2d$-dimensional canonical systems with $d > 1$.

A particularly important application arises in the stability analysis of traveling wave solutions \cite{wave}. Consider a nonlinear PDE describing gravity  long  crested dispersive-dissipative surface waves of the form:
\begin{equation}
u_t = u_{xx} + f(u,u_x)
\end{equation}
admitting a traveling wave solution $u(x,t) = \phi(x-ct)$ where $\phi(\xi)$ satisfies an ODE. The linearized stability problem around this wave leads to:
\begin{equation}
w_t = \mathcal{L}_\phi w := w_{xx} + a(\xi)w_x + b(\xi)w
\end{equation}
where $a(\xi)$ and $b(\xi)$ depend on $\phi$ and its derivatives. Seeking normal mode solutions $w(\xi,t) = e^{\lambda t}v(\xi)$ yields the spectral problem:
\begin{equation}
\mathcal{L}_\phi v = \lambda v
\end{equation}
Converting to first-order form with $y = (v, v')^T$ gives:
\begin{equation}
J\frac{dy}{d\xi} = (H(\xi) - \lambda M)y
\end{equation}
where $M = \begin{pmatrix} 1 & 0 \\ 0 & 0 \end{pmatrix}$ and $H(\xi) = \begin{pmatrix} b(\xi) & 0 \\ -a(\xi) & 1 \end{pmatrix}$.
The eigenvalues $\lambda$ with $\text{Re}(\lambda) > 0$ indicate unstable modes. The essential spectrum can be determined by analyzing the asymptotic behavior as $|\xi| \to \infty$, while point spectrum (isolated eigenvalues) requires solving the boundary value problem on $[0,N]$ with appropriate conditions.


As another application of our mathematical model, we mention that the self-adjoint realization established in Theorem 2.3 provides crucial information for spectral analysis. On one hand, we mention the reality of spectrum under suitable conditions. When $H(x)$ and the boundary conditions at 0 and $N$ satisfy appropriate reality/symmetry conditions, the relation $\mathcal T_{\Theta, B}$ can be shown to have real spectrum. This is essential for stability analysis---complex eigenvalues with positive real part indicate instability. Another topic is related to the orthogonality of eigenfunctions. For distinct eigenvalues $\lambda_m \neq \lambda_n$, the corresponding eigenfunctions $(y_m, z\lambda_m H(x)y_m)$ and $(y_n, z\lambda_n H(x)y_n)$ satisfy
\begin{equation}
\langle \lambda_m H y_m, y_n \rangle = \langle y_m, \lambda_n H y_n \rangle
\end{equation}
Using Green's identity Eq. (2.2) with appropriate boundary conditions, this yields orthogonality in the $H$-weighted inner product. This is fundamental for mode decomposition and completeness results. The model also relates to the variational characterization, because the self-adjointness allows variational formulations for eigenvalues. The smallest eigenvalue can be characterized as:
\begin{equation}
\lambda_1 = \inf_{y \in D(\mathcal T_{\Theta, B})} \frac{\langle y, Jy' \rangle}{\langle y, Hy \rangle}
\end{equation}
providing stability criteria: if $\lambda_1 < 0$, the system is unstable.


The model formulated in our paper can be extended even to spectral problems on unbounded domains (semi-infinite or infinite intervals), using the Evans function $\mathcal{E}(\lambda)$  as a tool for locating point spectrum \cite{eva}. For our canonical system Eq. (\ref{ca}):
\begin{equation}
J\frac{dy}{dx} = (zH(x) - \lambda M)y
\end{equation}
the Evans function is constructed from fundamental matrix solutions satisfying asymptotic boundary conditions, while the zeros of $\mathcal{E}(\lambda)$ correspond to eigenvalues. The symplectic structure of our canonical system ensures that $\mathcal{E}(\lambda)$ is analytic in appropriate regions of the complex $\lambda$-plane. Moreover, the self-adjoint structure guarantees that zeros on the real axis are simple and correspond to bound states.

We mention an  application of the theory we developed in our paper directly related to nonlinear equations. If we consider the specific example of a linearized NLS equation like the focusing nonlinear Schr\"{o}dinger equation
\begin{equation}
iu_t + u_{xx} + 2|u|^2u = 0
\end{equation}
we know it admits the bright soliton solution $u(x,t) = \eta \operatorname{sech}(\eta x)e^{i\eta^2 t}$. The linearized stability problem around this soliton \cite{nls}, after separation of variables $u(x,t) = [\phi(x) + \epsilon w(x,t)]e^{i\eta^2 t}$, leads to:
\begin{equation}
iw_t = -w_{xx} - 2\phi^2(w + \bar{w}) - \phi^2\bar{w}
\end{equation}

Setting $w(x,t) = (p(x) + iq(x))e^{i\lambda t}$ and decomposing into real and imaginary parts yields a $4 \times 4$ system that can be written as:
\begin{equation}
J\frac{dy}{dx} = (H(x) - \lambda M)y
\end{equation}
with $y = (p, p', q, q')^T$ and appropriate $4 \times 4$ matrices. The boundary conditions come from requiring $y \to 0$ as $|x| \to \infty$ (or appropriate truncation on $[0,N]$). Theorem 2.3 applied to this system guarantees that under appropriate boundary conditions, the operator is self-adjoint, ensuring real spectrum (neutrally stable modes). The continuous spectrum can be determined from the asymptotic analysis ($\lambda \in \mathbb{R}^+$), while solving for zeros of the Evans function reveals the point spectrum, including zero eigenvalues corresponding to translational and phase invariance.


\subsection{Spectral Stability of the Bright Soliton}

We now provide a detailed application of Theorem 2.3 to analyze the spectral 
stability of the bright soliton solution to the focusing nonlinear Schr\"{o}dinger equation. 
This example demonstrates how the abstract self-adjoint framework 
translates into concrete eigenvalue calculations for a physically relevant problem.

Consider the focusing nonlinear Schr\"{o}dinger equation on $\mathbb{R}$:
\begin{equation}
i u_t + u_{xx} + 2|u|^2 u = 0. \label{3.14}
\end{equation}
This equation admits a one-parameter family of bright soliton solutions:
\begin{equation}
u(x,t) = \eta \operatorname{sech}(\eta x) e^{i\eta^2 t}, \quad \eta > 0. \label{3.15}
\end{equation}
To study the stability of this solution, we consider a perturbation
\begin{equation}
u(x,t) = [\eta \operatorname{sech}(\eta x) + \epsilon w(x,t)] e^{i\eta^2 t}, \label{3.16}
\end{equation}
where $\epsilon \ll 1$ and $w(x,t)$ is a small complex-valued perturbation.

In order to linearize and reduce to the canonical form we substitute Eq.(\ref{3.16}) into Eq. (\ref{3.14}) and linearize in $\epsilon$, we obtain
\begin{equation}
i w_t = -w_{xx} - 2\eta^2 \operatorname{sech}^2(\eta x)(2w + \bar{w}). \label{3.17}
\end{equation}
Seeking normal mode solutions $w(x,t) = v(x) e^{i\lambda t}$ with 
$v(x) = p(x) + i q(x)$ (where $p, q$ are real-valued),  Eq. (\ref{3.17}) 
separates into the coupled system:
\begin{align}
\lambda q &= -p'' - 2\eta^2 \operatorname{sech}^2(\eta x)(2p + p) 
= -p'' - 6\eta^2 \operatorname{sech}^2(\eta x) p, \notag \\
-\lambda p &= -q'' - 2\eta^2 \operatorname{sech}^2(\eta x)(2q - q) 
= -q'' - 2\eta^2 \operatorname{sech}^2(\eta x) q. \label{3.18}
\end{align}
Introducing the phase-space variables
\begin{equation}
y = (p, p', q, q')^T \in \mathbb{R}^4, \label{3.19}
\end{equation}
we can write this as a first-order system. Define the matrices:
\begin{equation}
J = \begin{pmatrix} 0 & I_2 \\ -I_2 & 0 \end{pmatrix}, \quad 
I_2 = \begin{pmatrix} 1 & 0 \\ 0 & 1 \end{pmatrix}, \label{3.20}
\end{equation}
\begin{equation}
H(x) = \begin{pmatrix} H_{11}(x) & 0 \\ 0 & H_{22}(x) \end{pmatrix}, \quad 
M = \begin{pmatrix} M_1 & 0 \\ 0 & M_2 \end{pmatrix}, \label{3.21}
\end{equation}
where
\begin{equation}
H_{11}(x) = \begin{pmatrix} 6\eta^2 \operatorname{sech}^2(\eta x) & 0 \\ 
0 & 1 \end{pmatrix}, \quad 
H_{22}(x) = \begin{pmatrix} 2\eta^2 \operatorname{sech}^2(\eta x) & 0 \\ 
0 & 1 \end{pmatrix}, \label{3.22}
\end{equation}
\begin{equation}
M_1 = \begin{pmatrix} 0 & 0 \\ 0 & 1 \end{pmatrix}, \quad 
M_2 = \begin{pmatrix} 0 & 0 \\ 0 & 1 \end{pmatrix}. \label{3.23}
\end{equation}
The spectral problem then takes the canonical form:
\begin{equation}
J \frac{dy}{dx} = (H(x) - \lambda M) y, \quad x \in \mathbb{R}. \label{3.24}
\end{equation}
Written out explicitly:
\begin{equation}
\begin{pmatrix} p' \\ p'' \\ q' \\ q'' \end{pmatrix} = 
\begin{pmatrix} 
p' \\ 
6\eta^2 \operatorname{sech}^2(\eta x) p - \lambda q \\ 
q' \\ 
2\eta^2 \operatorname{sech}^2(\eta x) q + \lambda p 
\end{pmatrix}. \label{3.25}
\end{equation}
In the following we verify the assumptions and self-Adjointness. The matrix $H(x)$ has the following properties:
\begin{itemize}
\item \textbf{Hermitian}: $H(x)^* = H(x)$ (in fact, real symmetric).
\item \textbf{Positive semi-definite}: All eigenvalues are non-negative. 
Indeed, $H(x)$ has eigenvalues 
$\{6\eta^2\operatorname{sech}^2(\eta x), 1, 2\eta^2\operatorname{sech}^2(\eta x), 1\}$, 
all $> 0$.
\item \textbf{Locally integrable}: Since 
$\operatorname{sech}^2(\eta x) = O(e^{-2|\eta x|})$ as $|x| \to \infty$, 
we have $H(x) \in L^1_{\text{loc}}(\mathbb{R})$.
\end{itemize}
Even if we work in the bounded interval $[0,n]$ we add here  a brief analysis of   the asymptotic behavior. As $|x| \to \infty$, 
$\operatorname{sech}^2(\eta x) \to 0$, so
\begin{equation}
H(x) \to H_\infty = \begin{pmatrix} 
0 & 0 & 0 & 0 \\ 
0 & 1 & 0 & 0 \\ 
0 & 0 & 0 & 0 \\ 
0 & 0 & 0 & 1 
\end{pmatrix} \quad \text{as } |x| \to \infty. \label{3.26}
\end{equation}

\textbf{Hilbert space and boundary conditions.} We work on the 
weighted Hilbert space
\begin{equation}
L^2(H, \mathbb{R}) = \left\{ y : \int_{-\infty}^\infty y^*(x) H(x) y(x) \, dx < \infty \right\} \label{3.27}
\end{equation}
with inner product
\begin{equation}
\langle y_1, y_2 \rangle_H = \int_{-\infty}^\infty y_1^*(x) H(x) y_2(x) \, dx. \label{3.28}
\end{equation}
For the problem on the full line, we impose decay conditions at infinity. 
Following the framework of Section 2, we consider boundary conditions encoded 
by Lagrangian matrices $\Theta_-, \Theta_+ \in \mathcal{L}$ enforcing
\begin{equation}
\lim_{x \to -\infty} y^*(x) J P_-(x) = 0, \quad 
\lim_{x \to +\infty} y^*(x) J P_+(x) = 0, \label{3.29}
\end{equation}
where $P_\pm(x)$ are suitable projections. In practice, for localized potentials, 
this reduces to requiring $y(x) \to 0$ as $|x| \to \infty$ at a rate compatible 
with $H$-integrability.

Now we discuss the application of Theorem 2.3. Since $H(x)$ satisfies the 
required properties and the boundary conditions can be formulated via 
appropriate Lagrangian subspaces, Theorem 2.3 (extended to the half-line and 
combined for $\mathbb{R} = (-\infty, 0] \cup [0, \infty)$) guarantees that 
the relation
\begin{equation}
\mathcal{T} = \{(y, v) \in (L^2(H, \mathbb{R}))^2 : Jy' = Hv 
\text{ a.e., with decay at } \pm\infty\} \label{3.30}
\end{equation}
is self-adjoint. Indeed these imply:
\begin{itemize}
\item All eigenvalues $\lambda$ are real.
\item Eigenfunctions corresponding to distinct eigenvalues are 
orthogonal in $\langle \cdot, \cdot \rangle_H$.
\end{itemize}
In the following we discuss the point spectrum, and demonstrate the existence of  zero eigenvalues from equation symmetries. The NLS Eq. (\ref{3.14}) possesses two continuous symmetries:
\begin{enumerate}
\item Translation invariance: $u(x,t) \mapsto u(x - x_0, t)$.
\item Phase invariance: $u(x,t) \mapsto e^{i\theta_0} u(x,t)$.
\end{enumerate}
By Noether's theorem, each continuous symmetry yields a zero eigenvalue in 
the linearized problem. For the first eigenfunction we use the translation mode. Differentiating the soliton Eq. (\ref{3.15}) 
with respect to the parameter $x_0$ (i.e., setting 
$\phi(x) = \eta \operatorname{sech}(\eta(x - x_0))$ and computing 
$\frac{\partial \phi}{\partial x_0}\big|_{x_0=0}$) gives
\begin{equation}
v_1(x) = -\eta^2 \operatorname{sech}(\eta x) \tanh(\eta x). \label{3.31}
\end{equation}
Since the linearization around $e^{i\eta^2 t}\phi(x)$ for real perturbations 
involves only the real part, we obtain
\begin{equation}
y_1(x) = \begin{pmatrix} 
v_1(x) \\ 
v_1'(x) \\ 
0 \\ 
0 
\end{pmatrix} = \begin{pmatrix} 
-\eta^2 \operatorname{sech}(\eta x) \tanh(\eta x) \\ 
\eta^3 (\operatorname{sech}(\eta x) - 2\operatorname{sech}^3(\eta x)) \\ 
0 \\ 
0 
\end{pmatrix}. \label{3.32}
\end{equation}
In order to verify that $y_1$ satisfies Eq. (\ref{3.25}) with $\lambda = 0$
we use Eq. (\ref{3.25}) with $\lambda = 0$, and obtain:
\begin{equation}
p'' = 6\eta^2 \operatorname{sech}^2(\eta x) p, \quad 
q'' = 2\eta^2 \operatorname{sech}^2(\eta x) q.
\end{equation}
For $p = v_1(x)$ and $q = 0$, the second equation is trivially satisfied. 
For the first equation, we note that $v_1(x)$ is the derivative of the 
soliton profile and hence satisfies the linearized Schr\"{o}dinger equation 
$-v_1'' + V(x)v_1 = 0$ where $V(x) = 2\eta^2 - 6\eta^2\operatorname{sech}^2(\eta x)$.  This can be verified by direct calculation or by noting that $v_1$ is the 
known bound state from translational symmetry.

The $H$-integrability can be now easily verified.  Since $\operatorname{sech}(\eta x) \sim e^{-\eta|x|}$ 
as $|x| \to \infty$,
\begin{equation}
\int_{-\infty}^\infty y_1^* H y_1 \, dx \sim 
\int_{-\infty}^\infty \eta^4 \operatorname{sech}^2(\eta x)\tanh^2(\eta x) 
\cdot 6\eta^2\operatorname{sech}^2(\eta x) \, dx < \infty. \label{3.33}
\end{equation}
For the second eigenfunction we use the  phase mode symmetry. Differentiating with respect to the 
phase $\theta_0$ in $u(x,t) = e^{i\theta_0} \eta \operatorname{sech}(\eta x) e^{i\eta^2 t}$ 
at $\theta_0 = 0$ gives a purely imaginary perturbation: 
$i \cdot \eta\operatorname{sech}(\eta x)$. In the $(p, q)$ decomposition, 
$p = 0$ and $q = \eta\operatorname{sech}(\eta x)$, so
\begin{equation}
y_2(x) = \begin{pmatrix} 
0 \\ 
0 \\ 
\eta \operatorname{sech}(\eta x) \\ 
-\eta^2 \operatorname{sech}(\eta x)\tanh(\eta x) 
\end{pmatrix}. \label{3.34}
\end{equation}
Verification is straightforward. With $\lambda = 0$, $p = 0$, and 
$q = \eta\operatorname{sech}(\eta x)$, we need 
$q'' = 2\eta^2\operatorname{sech}^2(\eta x) q$. This follows from the fact 
that $q(x) = \eta\operatorname{sech}(\eta x)$ is the soliton profile itself 
and satisfies the stationary NLS equation, which in linearized form gives 
the required equation for the phase mode.
The $H$-integrability results in a way similar to the procedure used for  $y_1$, and we have $y_2 \in L^2(H, \mathbb{R})$.

Orthogonality property: Since $y_1$ and $y_2$ correspond to the same 
eigenvalue $\lambda = 0$ but arise from different symmetries, they are 
linearly independent. By self-adjointness, they span the eigenspace for 
$\lambda = 0$, which has (geometric and algebraic) multiplicity 2.

The essential spectrum $\sigma_{\text{ess}}$ is determined by the asymptotic 
behavior as $|x| \to \infty$. For the asymptotic system Eq. (\ref{3.26}), the equation 
becomes
\begin{equation}
J \frac{dy}{dx} = (H_\infty - \lambda M) y = \begin{pmatrix} 
0 & 0 & 0 & 0 \\ 
0 & -\lambda & 0 & 0 \\ 
0 & 0 & 0 & 0 \\ 
0 & 0 & 0 & -\lambda 
\end{pmatrix} y. \label{3.35}
\end{equation}
This decouples into:
\begin{equation}
p' = p', \quad p'' = -\lambda q, \quad q' = q', \quad q'' = \lambda p.
\end{equation}

For oscillatory solutions (essential spectrum), we seek $y \sim e^{ikx}$ as 
$|x| \to \infty$. The dispersion relation is:
\begin{equation}
-k^2 p = -\lambda q, \quad -k^2 q = \lambda p.
\end{equation}

This gives $(k^2)^2 = \lambda^2$, so $k^2 = \pm |\lambda|$. For $k \in \mathbb{R}$ 
(oscillatory modes in $L^2$ at infinity), we need $k^2 \geq 0$. For the 
Schr\"{o}dinger-type operators here, standard Weyl theory gives:
\begin{equation}
\sigma_{\text{ess}} = [0, \infty). \tag{3.36}
\end{equation}
This is because the operators 
$-\frac{d^2}{dx^2} - 6\eta^2\operatorname{sech}^2(\eta x)$ and 
$-\frac{d^2}{dx^2} - 2\eta^2\operatorname{sech}^2(\eta x)$ both have 
essential spectrum $[0,\infty)$ on $L^2(\mathbb{R})$.

The bright soliton Eq. (\ref{3.15}) is spectrally stable 
under the linearized dynamics. This result is consistent with the well-known 
stability of bright solitons in the focusing NLS equation, as established by 
Kapitula and Sandstede \cite{nls} using Evans function techniques. Indeed, by self-adjointness (Theorem 2.3), all eigenvalues are real.  The spectrum of the linearized operator is:
\begin{equation}
\sigma = \{0\} \cup [0, \infty), \label{3.37}
\end{equation}
where $\lambda = 0$ has multiplicity 2 (from symmetries). There are:
\begin{itemize}
\item No eigenvalues with $\text{Re}(\lambda) > 0$ (exponential instability).
\item No eigenvalues with $\text{Im}(\lambda) \neq 0$ (oscillatory instability).
\item The two zero eigenvalues are neutrally stable modes corresponding 
to translations and phase shifts.
\end{itemize}
In the end of this discussion, we remark that without Theorem 2.3, one would need 
to verify self-adjointness separately, which is nontrivial for systems with 
$H(x)$ potentially singular. Our framework provides this guarantee abstractly.

Moreover, the zeros of the Evans function 
$E(\lambda)$ for this problem occur at $\lambda = 0$ (double zero) and nowhere 
else in $\mathbb{C} \setminus [0,\infty)$. The self-adjoint structure ensures 
$E(\lambda)$ is real-analytic on $\mathbb{R}$ and has only real zeros. The same approach applies to other soliton  solutions (multi-solitons, vector solitons, dark solitons), to other integrable  PDEs (KdV, modified KdV, Sine-Gordon), and to non-integrable perturbations.


\section{Conclusion}
In conclusion, we established that the linear relations naturally induced by 2d-dimensional canonical systems admit self-adjoint realizations by using symplectic structure and boundary conditions. We obtained the conditions under which self-adjointness is achieved, providing a foundational link between the algebraic properties of the linear relation and the geometric structure of the underlying canonical system. Consequently, the Theorem 2.3 offers a robust framework for spectral analysis and paves the way for further applications in wave phenomena, quantum systems, and related areas of mathematical physics.\\

We present a working example on the spectral stability of NLS bright soliton
showing detailed calculations on the linearization of the equation and its reduction to canonical form. We verify the self-adjointness and we apply directly Theorem 2.3 to this example. 

Further on, without going into more detail, we list below other possible physical applications of the methods developed in this paper.  Canonical systems in the sense of de Branges' spectral theory are exactly in the form of Eq. (\ref{ca}). They provide a unifying framework that includes Sturm-Liouville and 1D Schr\"{o}dinger problems after transformations. The problem presented in the previous sections helps solving inverse problems (reconstructing potentials or coefficients from spectral data), scattering theory and completeness of eigenfunctions. Transfer matrices. Maxwell's equation in layered anisotropic media or waveguides can be cast as a first-order system in the propagation coordinate. The local energy matrix plays the role of the Hamiltonian and the system is symplectic (power quasi energy conservation). Using this formalism one can compute reflection/transmission via transfer matrices, Bloch-Floquet analysis for periodic media (photonic crystals), with applications in waveguides, optics and photonics. 

In elasticity theory, higher-order mechanical PDEs like the Euler-Bernoulli beam model, Timoshenko beam model and other plate equations can be rewritten as $1^{st}-$-order systems in a larger phase vector (displacement, moment, rotation, etc.) giving a Hamiltonian form with energy matrix $H$. This formalism can be used in boundary-value vibration problems, and in energy methods.

Telegraphers' equations and multi-conductor transmission  lines can reduce to first-order systems coupling voltages and currents. The energy positivity gives a Hermitian positive semidefinite coefficient matrix. One can use this formalism in problems like the  modal decomposition, impedance matching, design and stability of networks, models of importance in the field of transmission lines and electrical networks. 

One of the most desirable application is in the field of integrable systems. Zakharov-Shabat and AKNS type of systems. Many lax pairs from scattering problems used in inverse scattering (Nonlinear schr\"{o}dinger, KdV reductions) are first-order matrix systems with a canonical structure. For certain reality conditions the spatial part can be written in Hermitian symplectic form and can be used to calculate
the direct and inverse scattering problem, soliton solutions, and verify complete integrability.
\subsection*{Conflict of Interest:} On behalf of all authors, the corresponding author states that there is no conflict of interest.

\end{document}